\documentclass[proceedings]{stacs}
\stacsheading{2010}{179-190}{Nancy, France}
\firstpageno{179}


\usepackage{cite,amsthm,amsmath,amssymb,latexsym}



\theoremstyle{definition}\newtheorem{fact}{Fact}

\newcommand{\ceil}[1]{\left\lceil #1 \right\rceil}

\newcommand{\ri}{\mbox{\it Recent-items}}
\newcommand{\ei}{\mbox{\it Expire}}
\newcommand{\eif}{\mbox{\it F-Expire}}

\newcommand{\ns}{c}
\newcommand{\hns}{\hat c}

\newcommand{\cs}{c}
\newcommand{\hcs}{\hat c}
\newcommand{\kcs}{\tilde c}

\newcommand{\rs}{r_\sigma}
\newcommand{\cjs}{c_{j}}

\newcommand{\hcjs}{\hat c_{j}}
\newcommand{\rjs}{r_{j,\sigma}}

\newcommand{\nsm}{c_\sigma}

\newcommand{\cjsm}{c_{j,\sigma}}

\newcommand{\off}{\mbox{\it off}_{j}}
\newcommand{\true}{\mbox{\it true}}
\newcommand{\false}{\mbox{\it false}}
\newcommand{\comment}[1]{}

\begin{document}

\title[Continuous Monitoring of Distributed Data Streams over Sliding Window]
{Continuous Monitoring of Distributed Data Streams over a Time-based Sliding Window}

\author[lab1]{H.L. Chan}{Ho-Leung Chan}
\address[lab1]{Department of Computer Science, University of Hong Kong, Hong Kong}  
\email{{hlchan, twlam, hfting}@cs.hku.hk}
\author[lab1]{T.W. Lam}{Tak-Wah Lam}
\author[lab2]{L.K. Lee}{Lap-Kei Lee}
\address[lab2]{Max-Planck-Institut f\"ur Informatik, 66123 Saarbr\"ucken, Germany}	
\email{lklee@mpi-inf.mpg.de}
\author[lab1]{H.F. Ting}{Hing-Fung Ting}
\thanks{T.W. Lam is partially supported by the GRF Grant HKU-713909E;
H.F. Ting is partially supported by the GRF Grant HKU-716307E}

\keywords{Algorithms, distributed data streams, communication efficiency, frequent items}
\subjclass{F.2.2 [Analysis of algorithms and problem complexity]: Nonnumerical algorithms and problems}


\begin{abstract}
\noindent
The past decade has witnessed many interesting algorithms for maintaining
statistics over a data stream.
This paper initiates a theoretical study of algorithms for
monitoring distributed data streams over a time-based sliding window
(which contains a variable number of items
and possibly out-of-order items).  The concern is how to
minimize the communication between individual streams and the root,
while allowing the root, at any time, to be able to report the
global statistics of all streams within a given error bound.  This
paper presents communication-efficient algorithms for three
classical statistics, namely, basic counting,
frequent items and quantiles.
The worst-case communication cost over a window is
$O(\frac{k}{\varepsilon} \log \frac{\varepsilon N}{k})$
bits for basic counting and
$O(\frac{k}{\varepsilon} \log \frac{N}{k})$ words for the remainings, where
$k$ is the number of distributed data streams, $N$ is
the  {total}
 number of items in the streams that arrive or expire in the
window, and
$\varepsilon < 1$ is the desired error bound. Matching and
nearly matching lower bounds are also obtained.
\end{abstract}

\maketitle

\section{Introduction}
The problems studied in this paper are best illustrated by the following
puzzle. John and Mary work in different laboratories and communicate by
telephone only.  In a forever-running experiment, John records which
devices have an exceptional signal in every 10 seconds.
To adjust her devices, Mary at any time needs to keep track of the number of
exceptional signals generated by each device of John in the last one hour.
John can call Mary every 10 seconds to report the exceptional signals,
yet this requires too many calls in an hour and the total
message size per hour is linear to
the total number $N$ of exceptional signals in an hour.
Mary's devices actually allow some small error.  Can the number of
calls and message size be
reduced to $o(N)$, or even poly-$\log N$ if a small error (say, 0.1\%)
is allowed?
It is important to note that the input is given online
and Mary needs to know the answers continuously;
this makes our problem different from those in other similar classical
models, such as the
Simultaneous Communication
Complexity model \cite{BabaiGKL04}, in which all inputs
are given in advance and the parties need to
compute an answer only once.

\vspace{0.5ex}
{\bf Motivation.}
The above problem appears in data stream applications, e.g.,
network monitoring or stock analysis.
In the last decade, algorithms for continuous monitoring of a
single  massive data stream gained
a lot of attention (see \cite{Aggarwal06,Muthukrishnan05} for a survey), and
the main challenge has been how to represent the massive data
using limited space, while allowing
certain statistics (e.g., item counts, quantiles)
to be computed with sufficient accuracy.

The space-accuracy tradeoff for representing a single stream
has gradually been
understood over the years (e.g., \cite{AlonMS02,Indyk00,GuhaKS01,DemaineLM02}).
Recently, motivated by large scale networks,
the database community is enthusiastic about
communication-efficient algorithms for
continuous monitoring of
multiple, distributed data streams.
In such applications, we have $k \ge 1$ remote sites
each monitoring a data stream, and there is a
root (or coordinator) responsible for computing some global
statistics.  A remote site needs to
maintain certain statistics itself, and has to
communicate with the root often enough so that
the root can compute, at any time,
the statistics of the union of all
data streams within a certain error.
The objective is to minimize the communication.
The communication aspects of data streams introduce several challenging
theoretical questions such as
what is the optimal communication-accuracy tradeoff for maintaining a
particular statistic, and whether two-way communication is inherently
more efficient than one-way communication.

\vspace{0.5ex}
{\bf Data stream models and {\boldmath $\varepsilon$}-approximate queries.}
The data stream at
each remote site is a sequence of items from a totally ordered set $U$.
Each item is associated with
an integral time-stamp recording its  {arrival} time.
Each remote site has limited space and hence it
can only maintain the required statistics approximately.
The statistics can be
based on the whole data stream
\cite{AlonMS02,Indyk00,GuhaKS01,DemaineLM02} or
only the recent items
\cite{DatarM02,ArasuM04,LeeT06b}.
Recent items can be modeled by two types of sliding windows
\cite{BabcockDM02,DatarGIM02}.
Let $W$ be the window size, which is a positive integer.
The \emph{count-based sliding window}
includes the last $W$ items in the data stream, while the
\emph{time-based sliding window}
includes items whose time-stamps are within the last $W$ time units.
The latter assumes that zero or more items can arrive at a time.
Items in a sliding window will expire and are more
difficult to handle than in the whole data stream.
For example, counting the frequency of a
certain item in the whole stream
can be done easily by maintaining a single counter, yet the same
problem requires space
$\Theta(\frac{1}{\varepsilon} \log^2 (\varepsilon W))$ bits
for a count-based sliding window
even if we allow a relative error of at most $\varepsilon$
\cite{DatarGIM02, GibbonsT02}.  In fact, the whole data stream model
can be viewed as a special case of the sliding window model with
window size being infinite.  Also,
a count-based window is a special case of a time-based
  window in which exactly one item arrives at a time.
This paper focuses on time-based window, and
the algorithms are applicable to the other two models.

We study algorithms that enable the root to answer
three types of classical
$\varepsilon$-approximate queries, defined as follows.
Let $0 < \varepsilon < 1$. For any stream $\sigma$,
let $c_{j,\sigma}$
and $c_\sigma$ be the count of
item $j$ and all items
  {whose timestamps are} in the current window,
respectively.
Denote $c_j = \sum_\sigma c_{j,\sigma}$ and $c = \sum_\sigma c_\sigma$
as the total count of item $j$ and all items in all the data streams,
respectively.

\begin{itemize}
\item{\it Basic Counting.}
Return an estimate $\hat c$ on the total count $c$
such that $|\hat{c} - c| \le \varepsilon
c$.  (Note that this query can be generalized to count
data items of a fixed subset $X \subseteq U$;
the literature often refers to the special case with
$U=\{0,1\}$ and $X=\{1\}$.)
\item{\it Frequent Items.}
Given any $0 < \phi < 1$,
return
a set $F\subseteq U$ which includes all items $j$ with $c_j \ge \phi c$
and possibly some items $j'$ with $c_{j'} \ge \phi c - \varepsilon c$.

\item{\it Quantiles}.
Given any $0 < \phi < 1$, return an
item whose rank is in $\bigl[ \phi c - \varepsilon c,\, \phi c +
  \varepsilon c \bigr]$
among the $c$ items in the current sliding window.
\end{itemize}
  {As in most previous works,
we need to answer the
  following type of $\varepsilon$-approximate
queries in order to answer queries on frequent items.}
\begin{itemize}
\item{\it Approximate Counting.}
Given any item $j$,
return an estimate $\hat c_j$ such that $|\hat c_j - c_j| \le \varepsilon c$.
(Note that this query gives estimate for any item, not just the
frequent items.   {Also, the error bound is in term
of $c$, which may be much larger than $c_j$.})
\end{itemize}

We need an algorithm to determine when and how
the remote sites communicate with the root so that the
root can answer the queries at any time.
The objective is to minimize the worst-case communication cost
within a window of $W$ time units.

\vspace{.5ex}
{\bf Previous works.}
Recently, the database literature has a flurry of results on continuous
monitoring of distributed data streams, e.g.
\cite{OlstonJW03,GreenwaldK04,DasGGR04,SharfmanSK06,
ManjhiSDO05,CormodeGMR05,BabcockO03,JainYDZ05,MouratidisBP06,CormodeG05}.
The algorithms studied can be classified into two
types: {\em one-way} algorithms only allow messages
sent from each remote site to the root, and
{\em two-way} algorithms allow bi-directional
communication between the root and each site.
One-way algorithms are often very simple
as a remote site has little
information and all it can do is to update the root
when its local statistics deviate significantly from those previously sent.
On the other hand,
most two-way algorithms are complicated and
often involve non-trivial heuristics.
It is commonly believed in the database community
that two-way algorithms
are more efficient; however, for most
existing two-way algorithms, their worst-case communication costs
are still waiting for rigorous mathematical analysis, and
existing works often rely on experimental results when evaluating
the communication cost.


The literature contains several results on the
mathematical analysis of the worst-case
performance of one-way algorithms.
They are all for the whole data stream setting.
Keralapura et al.\ \cite{KeralapuraCR06}
studied the thresholded-count problem, which leads to
an algorithm for basic counting with communication cost
$O(\frac{k}{\varepsilon}\log \frac{N}{k})$ words, where $k$ and
$N$ are the number of streams and the
number of items in these streams, respectively.
Cormode~et~al.\ \cite{CormodeGMR05} gave an algorithm for
quantiles with communication cost $O(\frac{k}{\varepsilon^2} \log
\frac{N}{k})$ words per stream.
They also showed how to handle frequent items via
a reduction to quantiles,
so the communication cost remains the same.
More recently,  Yi and Zhang \cite{YiZ08} have reduced
the communication cost for frequent items
to $O(\frac{k}{\varepsilon} \log \frac{N}{k})$ words,
and quantile to $O(\frac{k}{\varepsilon} \log^2 (\frac{1}{\varepsilon})
\log \frac{N}{k})$ words, using some two-way algorithms; these are the
only analyses for two-way algorithms so far.

There have been attempts to devise heuristics
to extend some whole-data-stream algorithms to
sliding windows, yet not much has been known about
their worst-case performance.
For example, Cormode~et al.\ \cite{CormodeGMR05} have
extended their algorithms for quantiles and frequent items
to sliding windows.
They believed that the communication cost would only have
a mild increase,
but no supporting analysis has been given.  The
analysis of sliding-window algorithms is more difficult because
the expiry of items destroys some monotonic
property that is important to the analysis for
whole data stream.
In fact,
finding sliding-window algorithms with efficient worst-case
communication has been posed as an open problem
in the latest work of Yi and Zhang \cite{YiZ08}.

\vspace{0.5ex}
{\bf Our results.}
This paper gives the first mathematical analysis of
the communication cost in the sliding window model.
We derive
lower bounds on the worst-case communication cost
of any two-way
algorithm (and hence any one-way algorithm) for answering the four types
of $\varepsilon$-approximate queries.
These lower bounds hold even when each remote site
has unlimited space to maintain the local statistics exactly.
More interestingly, we analyze
some common-sense algorithms that use one-way communication only
and prove that their communication costs
match or nearly match the corresponding lower bounds.
In our algorithms, each remote site only needs to
maintain some $\Theta(\varepsilon)$-approximate statistics for its local data,
which actually adds more complication
to the problem.
These results demonstrate optimal or near optimal
communication-accuracy tradeoffs for supporting these queries
over the sliding window.
Our work reveals that
two-way algorithms could not
be much better than one-way algorithms in the worst case.

\renewcommand{\baselinestretch}{0.9}
\begin{table}\small
\center{
\begin{tabular}{| c || c | c | c |}
\hline
& {Basic Counting} \rule[-2.5mm]{0mm}{7mm}  & Approximate Counting/ & Quantiles  \\
& \raisebox{1ex}[0mm][0mm]{({\small\bf bits}) }  &
  \raisebox{1ex}[0mm][0mm]{Frequent items (\small\bf words)} &
  \raisebox{1ex}[0mm][0mm]{({\small\bf words})} \\
\hline \hline
\raisebox{-1ex}[0mm][0mm]{Whole data} &
 \raisebox{-0.2ex}[0mm][0mm]{$O( \frac{k}{\varepsilon} \log
   \frac{N}{k})$ words  \cite{KeralapuraCR06}}
 & $O(\frac{k}{\varepsilon}\log \frac{N}{k})$ \cite{YiZ08}
 & \rule[-2.5mm]{0mm}{8mm}
\raisebox{0.4ex}[0mm][0mm]{$O(\frac{k}{\varepsilon}
 \log^{2}(\frac{1}{\varepsilon})\log  \frac{N}{k})$ \cite{YiZ08}}

\\ \cline{3-4}
\raisebox{1ex}[0mm][0mm]{stream}&
  \raisebox{0.5ex}[0cm][0cm]{
 $\Theta( \frac{k}{\varepsilon} \log
 \frac{\varepsilon N}{k})$ bits
  }
&  \multicolumn{2}{c|}{\rule[-2.5mm]{0mm}{7mm}$\Omega(\frac{k}{\varepsilon}
\log \frac{\varepsilon N}{k} )$ \cite{YiZ08,YiZ09}} \\ \hline
\raisebox{-1.9ex}[0mm][0mm]{Sliding window}  &
\rule[-2.5mm]{0mm}{7mm}
\raisebox{-1.9ex}[0cm][0cm]{$\Theta(\frac{k}{\varepsilon} \log
  \frac{\varepsilon N}{k})$}
&
$O(\frac{k}{\varepsilon}\log \frac{N}{k})$ &
$O(\frac{k}{\varepsilon^2}\log
\frac{N}{k})$ \\ \cline{3-4}
\raisebox{0ex}[0mm][0mm]{}  & & \multicolumn{2}{c|}{
\rule[-2.5mm]{0mm}{7mm}
$\Omega(\frac{k}{\varepsilon}\log \frac{\varepsilon N}{k})$}
\\ \hline
\raisebox{-2ex}[0mm][0mm]{Sliding window} \rule[-4mm]{0mm}{9mm} &
$O( (\frac{W}{W-\tau})\frac{k}{\varepsilon} \log \frac{\varepsilon N}{k})$
&
$O((\frac{W}{W-\tau})\frac{k}{\varepsilon}\log \frac{N}{k})$ &
$O((\frac{W}{W-\tau})\frac{k}{\varepsilon^2}\log \frac{N}{k})$ \\
\cline{2-4}
\raisebox{1ex}[0mm][0mm]{\& out-of-order } &
$\Omega(\max\{\frac{W}{W-\tau},\frac{k}{\varepsilon}\log
    \frac{\varepsilon N}{k}\})$
&
\multicolumn{2}{c|}{\rule[-3.3mm]{0mm}{8.5mm}$\Omega(\max\{\frac{W}{W-\tau},
  \frac{k}{\varepsilon}\log \frac{\varepsilon N}{k}\})$}
\\ \hline
\end{tabular}
\caption{Bounds on the communication costs.
Note that the bounds are stated in bits for basic counting,
and in words for the other problems.}
\label{table:summary}
}
\end{table}
\renewcommand{\baselinestretch}{1}

Below we state the lower and upper bounds precisely.
  {Recall that there are $k$ remote sites and
the sliding window contains $W$ time units.
We prove that}
 {within any window,}
  {the root and the remote sites need to communicate,
in the worst case,
$\Omega(\frac{k}{\varepsilon}\log \frac{\varepsilon N}{k})$
bits  for basic counting
and $\Omega(\frac{k}{\varepsilon}\log \frac{\varepsilon N}{k})$
words for the other three queries, where $N$ is the total number of items
arriving or expiring within that
window.}\footnote{  {Note
that the number of items arriving or expiring within window
$[t-W+1,t]$ is no greater than the number of items arriving within
$[t-2W+1,t]$.}}
For upper bounds, our analysis shows that
basic counting requires $O(\frac{k}{\varepsilon}\log
\frac{\varepsilon N}{k})$ bits within any window,
and approximate counting $O(\frac{k}{\varepsilon} \log \frac{N}{k})$ words.
The estimates given by approximate counting are
sufficient to find frequent items, hence the latter problem
has the same communication cost.
For quantiles, it takes $O(\frac{k}{\varepsilon^2}\log \frac{N}{k})$ words.
See the second row (sliding window)
of Table~\ref{table:summary} for a summary.

As mentioned before, sliding-window algorithms can be
applied to handle the special case of whole data streams
  {in which the window size $W$ is infinite and $N$
  is the total number of arrived items.}
The first row of Table~\ref{table:summary} shows
the results on whole data streams.
Our work has improved the communication cost for
basic counting from $O( \frac{k}{\varepsilon}\log \frac{N}{k})$
words~\cite{KeralapuraCR06}
to $O(\frac{k}{\varepsilon}\log\frac{\varepsilon N}{k})$ bits.
For approximate counting and frequent items, our work implies
a one-way algorithm with communication cost of
$O(\frac{k}{\varepsilon} \log \frac{N}{k})$
words; this matches the performance of
the two-way algorithm by Yi and Zhang~\cite{YiZ08}.
In their algorithm, the root
regularly updates every remote site about the global count of all items.
In contrast,
we use the idea that items with small count could
be ``turned off'' for further updating.
As a remark, our upper bound on quantiles is
$O(\frac{k}{\varepsilon^2} \log \frac{N}{k})$ words
which is weaker than that of~\cite{YiZ08}.

Our algorithms can be readily applied
to out-of-order streams \cite{BuschT07,CormodeKT08}.
 {In an out-of-order stream, each item is associated
  with an integral time-stamp recording its creation time, which may
  be different from its arrival time.
We say that the stream has \emph{tardiness} $\tau$ if
any item with time-stamp $t$ must arrive within $\tau$ time
units from $t$, i.e., at any time in $[t, t+ \tau]$.
Without loss of generality, we assume that $\tau \in \{0, 1, 2,\dots,
W-1\}$ (if an item time-stamped at $t$ arrives after $t + W-1$,
it has already expired and can be ignored).
Note that for any data stream with tardiness greater than zero, the
items may not be arriving in non-decreasing order of their time-stamps.
Our previous discussion of
data streams assumes tardiness equal to $0$, and
such data streams are called \emph{in-order} data streams.}
The previous lower bounds for in-order streams are all valid
in the out-of-order setting. In addition,
we obtain lower bounds related to~$\tau$,
namely, $\Omega(\frac{W}{W-\tau})$ bits for basic counting and
$\Omega(\frac{W}{W-\tau})$ words for the other three problems.
Regarding upper bounds,
our algorithms when applied to
out-of-order streams with tardiness $\tau$ will just
increase the communication
cost by a factor of $\frac{W}{W-\tau}$.
The results are summarized in the last row of Table~\ref{table:summary}.

\vspace{0.5ex}
The idea for basic counting is relatively simple.
As the root does not require an exact total count, each data
stream can communicate to the root only when its local count
increases or decreases by a certain ratio $\varepsilon > 0$; we call such a
communication step an \emph{up} or \emph{down} event, respectively.
To answer the total count of all streams, the root simply sums up
all the individual counts it has received. It is easy to prove that
this answer is within some desired error bound.
If each count is over the whole stream (i.e., window size = $\infty$
and   {$N$ is the total number of arrived items}),
the count is increasing and there is no down event.
A stream would have at most $O(\log_{1+\varepsilon}N)$ up events
and the communication cost is at most that many words.
However, the analysis becomes non-trivial in a sliding time window.
Now items can expire and down events can occur.
An up event may be followed by some
down events and the count is no longer increasing.
The tricky part is to find a new measure of progress.
We identify a ``characteristic set''
of each up event such that
each up event must increase the size of
this set by a factor of at least $1+\varepsilon$,
hence bounding the number of up events to be $O(\log_{1+\varepsilon} N)$.
Down events are bounded using another characteristic set.
Due to space limitation, the details can only be given in the full paper.

Approximate counting of all possible items is much more complicated,
which will be covered in details in the rest of this paper.
Assuming in-order streams, we derive and analyze two algorithms
 for approximate counting in Section~\ref{sec:fi}.
In Section~\ref{sec:extension}, we discuss frequent items, quantiles, and
finally out-of-order streams.
The lower bound results 
are relatively simple and omitted due to space limitation.

\section{Approximate Counting of all items}
\label{sec:fi}
This section presents algorithms for the streams to communicate to the root
so that the root at any time can approximate
the count of each item.
As a warm-up, we first consider the simple algorithm
in which a stream will inform the root
whenever its count of an item increases or decreases
by a certain fraction of its total item count.
  {We show
in Section~\ref{sect:simpleAlg} that within any window of $W$ time units,
each data stream $\sigma_i$ ($1 \leq i \leq k$) needs to send at most
$O((\Delta + \frac{1}{\varepsilon}) \log n_i)$ words to the
root, where $\Delta$ is the number of distinct items and $n_i$
is the number of items of $\sigma_i$ that arrive or expire within the
window.
Then, the total communication cost within this window is $\sum_{1 \leq
  i \leq k} (\Delta + \frac{1}{\varepsilon})\log n_i$, which, by
Jensen's inequality, is no greater than $(\Delta +
\frac{1}{\varepsilon})k \log (\sum_{1 \leq i \leq k} n_i)/k =
(\Delta + \frac{1}{\varepsilon})k \log \frac{N}{k}$
where $N = \sum_{1 \leq i \leq k}n_i$.}
We then modify the algorithm so that
a stream can ``turn off'' items whose counts are too small,
and we give a more complicated analysis
to deal with the case when many such items increase their counts rapidly
(Section~\ref{sect:fullAlg}).
The communication cost is reduced to $O(\frac{k}{\varepsilon} \log
\frac{N}{k})$ words,  {independent of $\Delta$.}

\subsection{A simple algorithm}
\label{sect:simpleAlg}
Consider any stream $\sigma$.  At any time $t$,
let $\ns(t)$ and $\cjs(t)$ be the number of all items and item $j$
arriving at $\sigma$ in $[t-W+1, t]$, respectively.
Let $\lambda < 1/11$ be a positive constant
(which will be set to $\varepsilon/11$).
We maintain two $\lambda$-approximate data structures \cite{DatarGIM02,LeeT06}
at $\sigma$ locally,
which can report estimates $\hns(t)$ and $\hcjs(t)$ for
$\ns(t)$ and $\cjs(t)$, respectively,
such that
\footnote{
The constant 6 in the inequality is arbitrary.
It can be replaced with any number provided that
other constants in the algorithm and analysis
(e.g., the constant 9 in definition of up events)
are adjusted accordingly.
}
$$
(1 - \lambda / 6)\ns(t) \le \hns(t) \le (1+\lambda / 6) \ns(t);
\mbox{ \hspace{.1in} and \hspace{.1in} }
\cjs(t) - \lambda \ns(t) \le \hcjs(t) \le \cjs(t) + \lambda \ns(t).
$$

\vspace{.5ex}
\begin{center}
\begin{minipage}{.9\textwidth}
\hrulefill

{\bf Simple algorithm.}
At any time $t$, for any item $j$,
let $p < t$ be the last time $\hcjs(p)$ is sent
to the root. The stream sends the estimate $\langle j,
\hcjs(t)\rangle$
to the root if the following event occurs.
\begin{itemize}
\item
{\it Up:}\hspace*{3.1ex}  $\hcjs(t) > \hcjs(p) + 9\lambda \hns(t)$.

\item
{\it Down:}  $\hcjs(t) < \hcjs(p) - 9\lambda \hns(t)$.
\end{itemize}
\vspace*{-.1in}
\hrulefill
\end{minipage}
\end{center}

\vspace{1ex}
\noindent{\bf Root's perspective.}
At any time $t$,
let $\rjs(t)$ be the last estimate received from a stream
$\sigma$ for item $j$ (at or before $t$).  The root can estimate
the total count of item $j$ over all streams by
summing all $\rjs(t)$ received.
More precisely, for any $0 < \varepsilon < 1$,
we set $\lambda = \varepsilon / 11$ and let each stream
use the simple algorithm.
Then for each stream $\sigma$,
the approximate data structures for $\hat c_j(t)$ and
$\hns(t)$ together with the simple algorithm guarantee that
$
  \cjs(t) -11 \lambda \ns(t) \le \rjs(t) \le \cjs(t) + 11 \lambda \ns(t).
$
Summing $\rjs(t)$ over all streams
would give the root an estimate of the total count of item $j$
within an error of $\varepsilon$ of the total count of all items.

\vspace{.5ex}
\noindent{\bf Communication Complexity.}
At any time $t$, we denote the reference window as
$[t_o, t]$, where $t_o = t-W + 1$.
 {Let $n$ be the number of items of $\sigma$ that
arrive or expire in $[t_o, t]$.}
Assume that there are at most $\Delta$ distinct items.
We first show that a stream $\sigma$ encounters
$O((\frac{1}{\lambda} + \Delta) \log n)$ up events
and sends $O((\frac{1}{\lambda} + \Delta) \log n)$ words
within $[t_o,t]$.  The analysis of down events is similar and will
be detailed later. For any time $t_1 \le t_2$,
it is useful to define $\sigma_{[t_1,t_2]}$
(resp.\ $\sigma_{j,[t_1,t_2]}$)
as the multi-set of all items (resp.\ item $j$ only)
arriving at $\sigma$ within $[t_1,t_2]$, and $|\sigma_{[t_1,t_2]}|$
as the size of this multi-set.

Consider an up event $U_j$ of some item $j$
that occurs at time $v \in [t_o, t]$.
Define the {\em previous event} of $U_j$ to be
the latest event (up or down) of item $j$
that occurs at time $p < v$.
We call $p$ the \emph{previous-event time} of $U_j$.
The number of up events
with previous-event time before $t_o$
is at most~$\Delta$.
To upper bound the number of
up events with previous-event time $p \ge t_o$
is, however, non-trivial;
below we call such an up event a \emph{follow-up} (event).
Intuitively, a follow-up can be triggered by
frequent arrivals of an item, or mainly
the relative decrease of the total count.
This motivates us to classify follow-ups
into two types and analyze them differently.
A follow-up $U_j$ is said to be {\em absolute} if
$\ns(p) \le \frac{6}{5} \ns(v)$, and {\em relative} otherwise.
Define $\ri(U_j)$ to be
the multi-set of item $j$'s that arrive after the
previous event of $U_j$, i.e., $\ri(U_j) = \sigma_{j,[p+1, v]}$.

\vspace{.5ex}
{\bf Absolute follow-ups.}
To obtain a tight bound of absolute follow-ups, we need
a characteristic-set argument that can
consider the growth of different items together.
Let $t_1, t_2, ..., t_k$ be the times in $[t_o, t]$
when some absolute follow-ups (of one or more items) occur.
Let $x_i$ be the number of items having an absolute follow-up at $t_i$.
Note that for all $i$, $x_i \le \min\{1 / (7 \lambda), \Delta\}$,%
\footnote{
  If an up event of an item $j$ occurs at time $t_i$,
  then $\cjs(t_i) \ge \hcjs(t_i) - \lambda \ns(t_i)
  > 9 \lambda \hns(t_i) - \lambda \ns(t_i) \ge 7 \lambda \ns(t_i)$.
  Thus the number of up events at time $t_i$ is at most
  $\ns(t_i) / (7 \lambda \ns(t_i)) = 1/ (7\lambda)$.
}  
and $\sum_{i=1}^{k} x_i$ is the number of absolute follow-ups in $[t_o, t]$.
We define the characteristic set $S_i$ at each $t_i$ as follows:
$$\mbox{$S_i$ = the union of
$\ri(U_j)$ over all absolute follow-ups $U_j$ occurring at $t_1, t_2, \dots, t_i$.
}
$$



 {Recall that $n$ is the number of items of $\sigma$
  that arrive or expire in $[t-W+1, t]$.}

\begin{lemma} \label{lem:fi-step-up-s-simple}
{\bf (i)} For any $2 \le i \le k$, $|S_i| > (1 + 6 x_i \lambda)
|S_{i-1}|$.
{\bf (ii)} There are $\sum_{i=1}^{k} x_i = O(\frac{1}{\lambda} \log n)$
absolute follow-ups within $[t_o, t]$.
\end{lemma}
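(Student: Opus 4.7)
The plan is to prove (i) by establishing two facts: first, every absolute follow-up $U_j$ occurring at $t_i$ contributes at least $6\lambda\ns(t_i)$ items to $S_i\setminus S_{i-1}$ through its set $\ri(U_j)$; second, every item counted by $S_{i-1}$ lies in $\sigma$'s window at time $t_i$, so $|S_{i-1}|\le\ns(t_i)$. Combining these with pairwise disjointness of all the $\ri(U_j)$'s that make up $S_i$ yields $|S_i|\ge|S_{i-1}|+6x_i\lambda\ns(t_i)\ge(1+6x_i\lambda)|S_{i-1}|$. Part (ii) will then follow by telescoping and using $|S_k|\le n$.

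For the per-follow-up contribution, I start from the up-event test $\hcjs(t_i)>\hcjs(p)+9\lambda\hns(t_i)$ and substitute the three approximation guarantees $\hcjs(t_i)\le\cjs(t_i)+\lambda\ns(t_i)$, $\hcjs(p)\ge\cjs(p)-\lambda\ns(p)$, and $\hns(t_i)\ge(1-\lambda/6)\ns(t_i)$. Using the absoluteness condition $\ns(p)\le\frac{6}{5}\ns(t_i)$ together with $\lambda<1/11$, the inequality simplifies to $\cjs(t_i)-\cjs(p)\ge 6\lambda\ns(t_i)$. Since $\cjs(t_i)-\cjs(p)$ equals $|\ri(U_j)|$ minus the item-$j$ expirations in $(p,t_i]$, this gives $|\ri(U_j)|\ge 6\lambda\ns(t_i)$.

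The most delicate step is the disjointness claim. For follow-ups of distinct items the multisets trivially contain disjoint items, but for two follow-ups of the same item $j$ at times $t_{i'}<t_i$ it requires care: the earlier follow-up is itself an event of $j$ strictly between the later follow-up's previous-event time $p$ and $t_i$, and since $p$ is the \emph{latest} event of $j$ before $t_i$, we must have $p\ge t_{i'}$, so the intervals $(p',t_{i'}]$ and $(p,t_i]$ do not overlap. For the bound $|S_{i-1}|\le\ns(t_i)$, every item counted in $S_{i-1}$ lies in some $\ri(U_j)$ for an absolute follow-up at time $v\le t_{i-1}$ with previous-event time $p\ge t_o$ (the latter is the defining property of a follow-up), so it arrives in $[t_o+1,t_{i-1}]\subseteq[t_i-W+1,t_i]$ and thus lies in $\sigma$'s window at $t_i$.

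For (ii), every item of $S_k$ arrives within $[t_o,t]$, giving $|S_k|\le n$. Iterating (i) yields $n\ge|S_k|\ge|S_1|\prod_{i=2}^{k}(1+6x_i\lambda)\ge\prod_{i=2}^{k}(1+6x_i\lambda)$, using $|S_1|\ge 1$. Since $x_i\le 1/(7\lambda)$ implies $6x_i\lambda\le 6/7<1$, the bound $\ln(1+y)\ge(\ln 2)\,y$ on $[0,1]$ yields $\sum_{i=2}^{k} x_i=O(\log n/\lambda)$, and adding $x_1=O(1/\lambda)$ completes the claim. The main obstacle I anticipate is the disjointness argument above, since a follow-up's previous event may be either an earlier up event or an intervening down event of the same item; the uniform definition of ``latest event of either type'' is what makes both situations go through without separate treatment.
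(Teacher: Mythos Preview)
Your proposal is correct and follows essentially the same route as the paper: bound $|\ri(U_j)| \ge \cjs(t_i)-\cjs(p) > 6\lambda\ns(t_i)$ via the up-event inequality, the local approximation guarantees, and the absoluteness bound $\ns(p)\le\frac{6}{5}\ns(t_i)$; then use $|S_{i-1}|\le \ns(t_i)$ to turn the additive gain into a multiplicative one; and telescope for~(ii). The only noteworthy difference is that you make the pairwise-disjointness of the $\ri(U_j)$'s explicit (the paper silently relies on it when writing $|S_i| > |S_{i-1}| + x_i\cdot 6\lambda\ns(t_i)$), and in~(ii) you invoke $\ln(1+y)\ge(\ln 2)\,y$ on $[0,1]$ whereas the paper uses $\ln(1+y)\ge y/(1+y)$ together with $x_i\le 1/(7\lambda)$ --- both yield the same $O(\frac{1}{\lambda}\log n)$ bound.
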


\begin{proof}
For (i), consider an absolute follow-up $U_j$ of an item $j$,
occurring at time $t_i$ with previous-event time $p_i$.
Note that the increase in the count of item $j$ from $p_i$ to $t_i$ must be due to the recent
items. We have
\begin{eqnarray*}
|\ri(U_j)|
& \ge & \cjs(t_i) - \cjs(p_i) \\
&\ge & \hcjs(t_i) - \hcjs(p_i) - \lambda \ns(t_i) - \lambda \ns(p_i)
\hspace{.545in}\mbox{(by $\sigma$'s local data structures)}\\
& > & 9 \lambda \hns(t_i) - \lambda \ns(t_i) - \lambda \ns(p_i)
\hspace{.995in}\mbox{(definition of an up event)}\\
& \ge & \textstyle(9 \lambda (1 - \frac{\lambda}{6}) - \lambda - \frac{6}{5} \lambda)\ns(t_i)
\ge 6 \lambda \ns(t_i) \hspace{.25in}\mbox{($U_j$ is absolute)}
\end{eqnarray*}
 There are $x_i$ absolute follow-ups at $t_i$,
so $|S_{i}| > |S_{i-1}| + x_i \left(6 \lambda \ns(t_i)\right)$.
Since $S_i \subseteq \sigma_{[t_o, t_i]}$,
$\ns(t_i) \ge |S_i| \ge |S_{i-1}|$.
Therefore, we have $|S_i| > |S_{i-1}| + 6 x_i \lambda |S_i|
\ge (1 + 6 x_i \lambda) |S_{i-1}|$.

For (ii), we note that $n \ge |S_k|
> \prod_{i = 2}^{k} (1 + 6  x_i \lambda) |S_{1}|$,
and $|S_1| \ge 1$.
Thus, $\prod_{i = 2}^{k} (1 + 6 x_i \lambda) < n$,
or equivalently, $\ln n > \sum_{i = 2}^{k} \ln(1 + 6 x_i \lambda)$.
The latter is at least
$\sum_{i=2}^{k} \frac{6 x_i\lambda}{1+ 6 x_i\lambda}
\geq
\lambda \sum_{i=2}^{k} x_i$.
The last inequality follows from that $x_i \le 1 / (7 \lambda)$ for all $i$.
Thus, $\sum_{i=1}^{k} x_i
\le x_1 + \frac{1}{\lambda} \ln n
= O( \frac{1}{\lambda} \log n)$.
\end{proof}

{\bf Relative follow-ups.}
A relative follow-up occurs only when a lot of items expire,
and relative follow-ups of the same item cannot occur
too frequently. Below we define $O(\log n)$ time intervals
and argue that no item can have two relative follow-ups within
an interval.
For an item with time-stamp $t_1$,
we define the \emph{first expiry time} to be $t_1+W$.
At any time $u$ in $[t_o,t]$, define $H_u$
to be the set of all items whose
first expiry time is within $[u+1, t]$,
i.e., $H_u = \sigma_{[u-W+1,t_o-1]}$.
$|H_u|$ is non-increasing as $u$ increases.
Consider the times $t_o = u_0 < u_1 < u_2 < \dots< u_\ell \le t$
such that for $i \ge 1$, $u_i$ is the first time
such that $|H_{u_i}| < \frac{5}{6}|H_{u_{i-1}}|$.
For convenience, let $u_{\ell + 1} = t+1$.
Note that $| H_{u_0} | \le n$ and $\ell = O(\log n)$.

\begin{lemma}\label{lem:fi-type2-up}
{\bf (i)} Every item $j$ has at most one relative follow-up $U_j$
within each interval $[u_i, u_{i+1}-1]$.
{\bf (ii)} There are at most $O( \Delta \log n)$
relative follow-ups within $[t_o, t]$.
\end{lemma}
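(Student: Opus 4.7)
The plan is to deduce (ii) immediately from (i) and concentrate the work on part (i). The sequence $t_o = u_0 < u_1 < \dots < u_\ell \le t$ already splits $[t_o,t]$ into $\ell+1 = O(\log n)$ subintervals (using the bound $\ell = O(\log n)$ noted just before the lemma), and by (i) each of the at most $\Delta$ distinct items contributes at most one relative follow-up per subinterval, giving the claimed $O(\Delta \log n)$ total.

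For (i), I would argue by contradiction. Suppose item $j$ has two relative follow-ups at times $v_1 < v_2$ both lying in the same interval $[u_i, u_{i+1}-1]$, and let $p_2$ denote the previous-event time of the second follow-up. Since the event at $v_1$ is itself an event of item $j$ strictly before $v_2$, the definition of previous event forces $p_2 \ge v_1 \ge u_i$. To compare $\ns(p_2)$ with $\ns(v_2)$ through the set $H_\cdot$, I would use the natural decomposition
\[ \ns(u) \;=\; |H_u| + |\sigma_{[t_o, u]}| \qquad \text{for every } u \in [t_o, t], \]
which splits the items counted by $\ns(u)$ according to whether their arrival time is before $t_o$ (these are exactly the items of $H_u$) or in $[t_o, u]$.

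The second term is non-decreasing in $u$, so evaluating at $u = p_2 \le v_2$ gives $|H_{p_2}| - |H_{v_2}| \ge \ns(p_2) - \ns(v_2)$. Because the second follow-up is \emph{relative}, $\ns(p_2) > \frac{6}{5}\ns(v_2)$, and hence $|H_{p_2}| - |H_{v_2}| > \frac{1}{5}\ns(v_2) \ge \frac{1}{5}|H_{v_2}|$, i.e.\ $|H_{p_2}| > \frac{6}{5}|H_{v_2}|$. On the other hand, $v_2 \le u_{i+1}-1$ together with the minimality in the definition of $u_{i+1}$ yields $|H_{v_2}| \ge \frac{5}{6}|H_{u_i}|$, so $|H_{p_2}| > |H_{u_i}|$, contradicting the non-increasingness of $|H_u|$ in combination with $p_2 \ge u_i$.

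The main obstacle is really choosing the right characteristic quantity: one needs a monotone-decreasing function of $u$ whose relative decrease captures precisely the ``expiration pressure'' that can trigger a relative follow-up, and the constants $\frac{6}{5}$ (in the definition of ``relative'') and $\frac{5}{6}$ (in the choice of the $u_i$'s) appear to be designed exactly so that the two estimates cancel to produce this contradiction. Once the sets $H_u$ and the decomposition of $\ns(u)$ are in hand, the calculation itself is short, and part (ii) then follows by the trivial counting argument above.
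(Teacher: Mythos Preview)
Your proposal is correct and follows essentially the same argument as the paper: the paper shows directly that for \emph{any} relative follow-up $U_j$ occurring at time $v\in[u_i,u_{i+1}-1]$ with previous-event time $p$, the chain $|H_p|-|H_v|\ge \ns(p)-\ns(v)>\tfrac{1}{5}\ns(v)\ge\tfrac{1}{5}|H_v|$ together with $|H_v|\ge\tfrac{5}{6}|H_{u_i}|$ forces $p<u_i$, which is exactly your contradiction applied to $p_2$. Your decomposition $\ns(u)=|H_u|+|\sigma_{[t_o,u]}|$ is a clean way to phrase the inequality $|H_p|-|H_v|\ge \ns(p)-\ns(v)$ that the paper obtains by direct multiset arithmetic, and part~(ii) is derived identically in both.
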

\begin{proof}
For (i), assume $U_j$ occurs at time $v$ in $[u_i, u_{i+1}-1]$, and its previous event
occurs at time $p$.
By definition, $\ns(p) > \frac{6}{5}\ns(v)$. Thus,
$$\textstyle
|H_p| - |H_v|
= | \sigma_{[p-W+1, v-W]} |
\ge \ns(p) - \ns(v) > \frac{1}{5} \ns(v)
\ge \frac{1}{5}| \sigma_{[v-W+1, t_o-1]}| = \frac{1}{5}|H_v| \enspace,
$$
and $|H_v| < \frac{5}{6} |H_p|$.
Since  $v < u_{i+1}$ and $|H_v| \ge \frac{5}{6}|H_{u_i}|$,
we have $|H_p| > |H_{u_i}|$ and $p < u_i$.
For (ii), there are $\Delta$ distinct items, so there
are at most $\Delta$ relative follow-ups
within each interval $[u_i, u_{i+1}-1]$,
and  at most $O(\Delta \log n)$ relative follow-ups within $[t_o, t]$.
\end{proof}

{\bf Down events.} The analysis is symmetric to that
of up events. The only non-trivial thing is the definition of the
characteristic set for
bounding the absolute follow-downs $D_j$, which
is defined in an opposite sense:
Assume $D_j$ occurs at time $v$ and its  previous event occurs at $p \ge t_o$.
$D_j$ is said to be {\em absolute}
if $\ns(p) \le \frac{6}{5} \ns(v)$.
Let $\ei(D_j)$ be
the multi-set of item $j$'s whose first expiry time is within
$[p+1,v]$. I.e., $\ei(D_j) = \sigma_{j,[p-W+1, v-W]}$.

It is perhaps a bit tricky that
instead of defining the characteristic set of absolute follow-downs
at the time they occur,
we consider the times of the corresponding
\emph{previous events} of these follow-downs.
Let $p_1, p_2, ..., p_{k}$ be the times in $[t_o, t]$ such that
there is at least one event $E_j$ (up or down) at $p_i$ which is the
previous event of an absolute follow-down $D_j$ occurring after $p_i$.
Let $y_i$ be the number of such previous events at $p_i$, and
let $AD(p_i)$ be the set of corresponding absolute follow-downs.
Note that $y_i$ (unlike $x_i$) only admits
a trivial upper bound of $\Delta$.
We define
the characteristic set $T_i$ for each $p_i$ as follows:
$$\mbox{$T_i$ = the union of $\ei(D_j)$ over all
$D_j \in AD(p_i), AD(p_{i+1}), \dots, AD(p_k)$.
}$$
Similar to Lemma~\ref{lem:fi-step-up-s-simple}, we can show
that $|T_{i}| > (1 + 5 y_i \lambda) |T_{i+1}|$. 
Owing to a weaker bound of individual $y_i$, the number
of absolute follow-downs, which equals
$\sum_{i=1}^{k} y_i$, is shown to be
$O((\frac{1}{\lambda} + \Delta) \log n)$.

\vspace{.5ex}
Combining the analyses on up and down events,
and let $\lambda = \varepsilon / 11$,  we have the following.
\begin{theorem}\label{thm:simple}
  {
The simple algorithm
sends at most $O( (\frac{1}{\varepsilon} + \Delta) \log n)$ words
to the root during window $[t-W+1, t]$.}
\end{theorem}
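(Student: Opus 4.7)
The strategy is to collect the counts of the up-event classes already established in Lemmas~\ref{lem:fi-step-up-s-simple} and \ref{lem:fi-type2-up} and then carry out the symmetric analysis for down events using the characteristic sets $T_i$ sketched right before the theorem.

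\textbf{Up events.} Every up event $U_j$ occurring in $[t_o,t]$ falls into exactly one of three categories according to its previous-event time $p$: (a) $p<t_o$, of which there are at most $\Delta$ since an item can contribute only one such event; (b) $p\ge t_o$ and $U_j$ absolute, bounded by $O(\frac{1}{\lambda}\log n)$ by Lemma~\ref{lem:fi-step-up-s-simple}(ii); and (c) $p\ge t_o$ and $U_j$ relative, bounded by $O(\Delta\log n)$ by Lemma~\ref{lem:fi-type2-up}(ii). Summing gives $O((\frac{1}{\lambda}+\Delta)\log n)$ up events in the window.

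\textbf{Down events.} I would use the same three-way split of a follow-down $D_j$ (occurring at $v$ with previous event at $p\ge t_o$, absolute iff $\ns(p)\le\tfrac{6}{5}\ns(v)$). The $p<t_o$ case again contributes at most $\Delta$. For relative follow-downs, I mimic Lemma~\ref{lem:fi-type2-up}: using the same partition $t_o=u_0<u_1<\dots<u_\ell\le t$ on which $|H_u|$ decreases geometrically, I show that a relative follow-down at $v\in[u_i,u_{i+1}-1]$ with previous event at $p$ forces $|H_v|<\tfrac{5}{6}|H_p|$ (because $\ns(p)>\tfrac{6}{5}\ns(v)$ means more than $\tfrac{1}{5}\ns(v)$ items expired between $p$ and $v$), hence $p<u_i$; this bounds each item to one relative follow-down per interval, for a total of $O(\Delta\log n)$. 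For absolute follow-downs, the characteristic-set inequality $|T_i|>(1+5y_i\lambda)|T_{i+1}|$ is obtained by the same chain of estimates as in Lemma~\ref{lem:fi-step-up-s-simple}(i), with $\cjs(p)-\cjs(v)$ in place of $\cjs(v)-\cjs(p)$, the down-event threshold $9\lambda\hns(v)$ in place of $9\lambda\hns(p)$, and the factor $\tfrac{6}{5}$ from absoluteness applied to $\ns(p)/\ns(v)$; the sign reversal is why the sets are accumulated from later times backward, as indicated in the excerpt.

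\textbf{Main obstacle.} The only genuinely new step is extracting $\sum_i y_i = O((\frac{1}{\lambda}+\Delta)\log n)$ from $\prod_{i}(1+5y_i\lambda)<n$ when the only available per-time upper bound on $y_i$ is the trivial $\Delta$, whereas for the up-event argument $x_i\le 1/(7\lambda)$ was used to linearise $\ln(1+6x_i\lambda)$. I would split the indices at the threshold $5y_i\lambda=1$. For indices with $5y_i\lambda\le 1$, the inequality $\ln(1+5y_i\lambda)\ge\tfrac{5y_i\lambda}{2}$ gives $\sum y_i\le \frac{2}{5\lambda}\ln n = O(\frac{1}{\lambda}\log n)$. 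For indices with $5y_i\lambda>1$, $\ln(1+5y_i\lambda)\ge\ln 2$ forces the number of such indices to be at most $\log_2 n$, and each contributes at most $\Delta$, yielding $O(\Delta\log n)$. Thus absolute follow-downs total $O((\frac{1}{\lambda}+\Delta)\log n)$.

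Adding the three down-event counts and the three up-event counts, and finally substituting $\lambda=\varepsilon/11$ (so that $1/\lambda=\Theta(1/\varepsilon)$), yields the claimed bound of $O((\frac{1}{\varepsilon}+\Delta)\log n)$ words transmitted by $\sigma$ in $[t-W+1,t]$.
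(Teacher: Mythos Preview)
Your proof is correct and follows the paper's approach: you invoke Lemmas~\ref{lem:fi-step-up-s-simple} and~\ref{lem:fi-type2-up} for up events and carry out the symmetric down-event analysis exactly as the paper sketches, so the overall structure matches.

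The one place where you supply more than the paper does is the extraction of $\sum_i y_i = O((\frac{1}{\lambda}+\Delta)\log n)$ from $\prod_i(1+5y_i\lambda)<n$; the paper merely asserts this, citing the weaker bound $y_i\le\Delta$. Your threshold split at $5y_i\lambda=1$ works. A slightly more direct route, mirroring the paper's own manipulation in Lemma~\ref{lem:fi-step-up-s-simple}(ii), is to use $\ln(1+5y_i\lambda)\ge \frac{5y_i\lambda}{1+5y_i\lambda}\ge \frac{5y_i\lambda}{1+5\Delta\lambda}$, which immediately gives $\sum_i y_i \le \frac{1+5\Delta\lambda}{5\lambda}\ln n = O((\frac{1}{\lambda}+\Delta)\log n)$ without any case split. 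Both arguments yield the same bound, so this is a cosmetic difference, not a substantive one.
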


\vspace*{-1ex}
\subsection{The full algorithm}
\label{sect:fullAlg}
\vspace*{-1ex}
In this section, we extend the previous algorithm and
give a new characteristic-set analysis that is based on
future events (instead of the past events) to show that
  {each stream's communication cost per window can be
  reduced to
$O(\frac{1}{\varepsilon} \log n)$ words.  Then,
by Jensen's inequality again, we conclude that the total
communication cost
per window is $O(\frac{k}{\varepsilon} \log \frac{N}{k})$.}
Intuitively, when the estimate $\hcjs(t)$ of an item $j$ is too small,
say, less than $3 \lambda \hns(t)$,
the algorithm treats this estimate as 0 and
set the  $\off$ flag of $j$ to be true.
This restricts the number of items with a positive estimate
to $O(\frac{1}{\lambda})$.
Initially, the $\off$ flag is true for all items $j$.
Given $0 < \lambda < $   {$\varepsilon /11$},
 the stream communicates with the root as follows.

\vspace{.5ex}
\begin{center}
\begin{minipage}{.9\textwidth}
\hrulefill

{\bf Algorithm AC.}
At any time $t$, for any item $j$,  {let
  {$p < t$} be the time the
last estimate of $j$, i.e., $\hcjs(p)$, is
sent to the root.}
The stream sends the estimate of $j$ to the root
if the following event occurs.
 \begin{itemize}
 \item
 {\it Up:} \hspace{3ex}If $\hcjs(t) >  {\hcjs(p)} + 9\lambda \hns(t)$,
 send $\langle j, \hcjs(t) \rangle$ and set $\off = \false$~.
 \item
 {\it Off:} \hspace{2.7ex}If
  $\off = \false$ and $\hcjs(t) < 3 \lambda \hns(t)$,
  {reset $\hcjs(t)$ to 0,
  send $\langle j, \hcjs(t)\rangle$}\\
  \mbox{}\hspace{.5in}{and set $\off=\true$.}
 \item
 {\it Down:} If $\off = \false$ and $\hcjs(t) <
  {\hcjs(p)} - 9\lambda \hns(t)$,
 send $\langle j,  \hcjs(t) \rangle$.
 \end{itemize}
 \vspace*{-.1in}
\hrulefill
 \end{minipage}
\end{center}
\vspace*{.05in}
\vspace{.5ex}

It is straightforward to check that the root can
answer the approximate counting query for any item.
We analyze the communication complexity of different events as follows.

\begin{fact} \label{lem:rj}
At any time $v$, the number of items $j$ with $\off = \false$ is at most
$\frac{1}{\lambda}$.\footnote{%
  For any item $j$, if $\off = \false$,
  then $\hcjs(v) \ge 3 \lambda \hns(v)$ and
  $\cjs(v) \ge \hcjs(v) - \lambda \ns(v)
  \ge (3 \lambda (1  - \lambda) - \lambda) \ns(v)
  \ge \lambda \ns(v)$.  Thus
  the number of items $j$ with $\off = \false$ is at most
  $\ns(v) / \lambda \ns(v) = \frac{1}{\lambda}$.
} 
\end{fact}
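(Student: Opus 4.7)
The plan is to verify that the \emph{Off} event mechanism enforces, at any moment, a nontrivial lower bound on $\hcjs(v)$ (hence on $\cjs(v)$) for every item still marked $\off=\false$, and then to extract the counting bound by a simple averaging argument over the total count $\ns(v)$.

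First I would establish the invariant: whenever $\off=\false$ for item $j$ at time $v$, we must have $\hcjs(v)\ge 3\lambda\hns(v)$. The reason is that $\off$ is set to $\false$ only at an Up event, where the algorithm's condition forces $\hcjs(v)>\hcjs(p)+9\lambda\hns(v)>9\lambda\hns(v)\ge 3\lambda\hns(v)$, and between two consecutive Up events the only way for $\off$ to remain $\false$ is that the Off condition $\hcjs(t)<3\lambda\hns(t)$ never triggers. Thus at the current time $v$ the Off condition must fail, giving $\hcjs(v)\ge 3\lambda\hns(v)$.

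Second, I would translate this lower bound on $\hcjs(v)$ into a lower bound on the true count $\cjs(v)$ via the approximation guarantees of the local data structures. Using $\hcjs(v)\le\cjs(v)+\lambda\ns(v)$ and $\hns(v)\ge(1-\lambda/6)\ns(v)$, the inequality $\hcjs(v)\ge 3\lambda\hns(v)$ yields
\[
\cjs(v)\ \ge\ 3\lambda(1-\lambda/6)\ns(v)-\lambda\ns(v)\ \ge\ \lambda\ns(v),
\]
where the last step uses $\lambda<1/11$.

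Finally I would aggregate: since $\sum_j \cjs(v)=\ns(v)$ and every item $j$ with $\off=\false$ contributes at least $\lambda\ns(v)$ to this sum, the number of such items is at most $\ns(v)/(\lambda\ns(v))=1/\lambda$, as required. No step here is a real obstacle; the only subtle point is justifying the invariant in the first paragraph (making sure the Off rule is actually checked at every time step and is not bypassed by an Up event, which is why a quick case analysis on the three events is needed).
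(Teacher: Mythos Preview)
Your proposal is correct and follows essentially the same approach as the paper's footnote proof: establish $\hcjs(v)\ge 3\lambda\hns(v)$ for any $j$ with $\off=\false$, convert this to $\cjs(v)\ge\lambda\ns(v)$ via the local approximation guarantees, and finish by the pigeonhole/averaging argument on $\ns(v)$. The only cosmetic differences are that you spell out the invariant justification more carefully and use the tighter factor $(1-\lambda/6)$ from the stated guarantee on $\hns$, whereas the paper's footnote writes the looser $(1-\lambda)$; both suffice for the final inequality.
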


\noindent{\bf Off events.}
  {
Recall that we are considering the window $[t_o,t]$, and
$n$ is the number of items arriving or expiring within $[t_o,t]$.}
By Fact~\ref{lem:rj},
just before $t_o$,
there are at most $\frac{1}{\lambda}$ items with $\off = \false$.
Within $[t_o, t]$, only an up event can set the {\em off}\/ flag
to false.
Thus the number of off events within  $[t_o, t]$
is bounded by $\frac{1}{\lambda}$ plus the number of up events.


\vspace{.5ex}
\noindent{\bf Up and Down events.}
The assumption of $\Delta$ gives a trivial bound on
 those events involving  items with very small counts and in
particular, those up events immediately following the off events.
Such up events are called {\em poor-up} events or simply {\em poor-ups}.
Using the {\em off}\/ flag, we can easily adapt
the analysis of the simple algorithm to bound
all the down and up events of the full algorithm,
but except the poor-ups. The following simple
observations, derived from Fact 1, allow us to replace
$\Delta$ with $1/\lambda$ in the previous analysis
to obtain a tighter upper bound of
$O(\frac{1}{\lambda} \log n)$.  Let $v$ be any time in $[t_o,t]$.

\vspace{.5ex}
\begin{itemize}
\item
There are at most $1/\lambda$ items
whose first event after $v$ is a down event.
\vspace{.5ex}
\item
There are at most $1/\lambda$ non-poor-up
events after $v$ whose previous event is before $v$.
\end{itemize}
\vspace{.5ex}

It remains to analyze the poor-ups.
Consider a poor-up $U_j$ at time $v$ in $[t_o, t]$.
By definition, $\off = \false$ at time $v$.
The trick of analyzing $U_j$'s
is to consider when the corresponding
items will be ``off'' again instead
of what items constitute the up events.
Then a characteristic
set argument can be formulated easily.
Specifically, we first observe that,
by Fact~1, there are at most $\frac{1}{\lambda}$ poor-ups
whose {\em off}\/ flags remain false up to time $t$.
Then it remains to consider those $U_j$
whose {\em off}\/ flags will be set to true at some time $d \le t$.
Below we refer to $d$ as the \emph{first off time} of $U_j$.


\vspace{.5ex}
\noindent{\bf Poor-up with early off.}
Consider a poor-up $U_j$ that occurs at time $v$ in $[t_o, t]$
and has its first off time at $d$ in $[v+1,t]$.
Let $\eif(U_j)$ be all the item $j$ whose first expiry time is
within $[v+1, d]$.  I.e.,  $\eif(U_j) = \sigma_{j,[v+1 - W, d -W]}$.
As an early off can be due to the expiry of many copies of item $j$
or the arrival of a lot of items, it is natural to divide
the poor-ups into two types:  with an \emph{absolute} off if
$\ns(d) \le \frac{6}{5} \ns(v)$, and \emph{relative} off otherwise.
For the case with absolute off, we consider
the distinct times $t_1, t_2, \dots, t_k$ in $[t_o, t$]
when such poor-ups occur.
Let $x_i$ be the number of such poor-ups at time $t_i$.
Note that $x_i \le 1 / (7 \lambda)$.
For each time $t_i$,
we define the characteristic set
$$\mbox{$F_i$ =  the union of
$\eif(U_j)$ over all $U_j$ occurring at $t_i, t_{i+1}, \dots, t_k$.
}
$$

\begin{lemma} \label{lem:fi-poor-up-s}
{\bf (i)}
For any $1 \le i \le k-1$, $|F_{i}| > (1 + x_i \lambda) |F_{i+1}|$.
{\bf (ii)}
Within $[t_o, t]$, there are $\sum_{i=1}^{k} x_i = O(
\frac{1}{\lambda} \log n )$
poor-ups each with an absolute off.
\end{lemma}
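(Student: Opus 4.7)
The plan is to run a characteristic-set argument in the opposite direction from Lemma~\ref{lem:fi-step-up-s-simple}: because $F_i = \bigcup_{\ell \ge i} \bigcup_{U_j \text{ at } t_\ell} \eif(U_j)$ indexes into the \emph{future}, we have $F_1 \supseteq F_2 \supseteq \dots \supseteq F_k$, and the shrinkage from $F_i$ to $F_{i+1}$ is exactly the contribution of the poor-ups at time $t_i$. To make this into a geometric-decay statement I will need (a) a disjointness property so that I can cleanly equate $|F_i|-|F_{i+1}|$ with the total size of the $\eif$ sets at~$t_i$, (b) a per-event lower bound on $|\eif(U_j)|$, and (c) an upper bound $|F_i| \le \ns(t_i)$.

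The key combinatorial observation is that for any two distinct poor-ups $U_j$ and $U_{j'}$, the multisets $\eif(U_j)$ and $\eif(U_{j'})$ are disjoint. If $j \ne j'$ this is immediate. If $j = j'$ and, say, $U_j$ occurs at $t_i$ with first off time $d$, then by the definition of a poor-up the next poor-up of item $j$ must be preceded by an off event, and the earliest such off after $t_i$ is at $d$ itself; hence $U_{j'}$ occurs at some $t_\ell > d$, so its first-expiry window $[t_\ell+1,d']$ is disjoint from $[t_i+1,d]$. This disjointness yields $|F_i| - |F_{i+1}| = \sum_{U_j \text{ at } t_i} |\eif(U_j)|$.

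For the per-event lower bound, I would plug the two event conditions into the local-approximation guarantees. A poor-up's previous estimate is~$0$ (it was reset by the off), so the up condition gives $\hcjs(t_i) > 9\lambda\hns(t_i)$ and hence $\cjs(t_i) > 7\lambda\ns(t_i)$ (absorbing the $\lambda/6$ slack). The first off condition gives $\hcjs(d) < 3\lambda\hns(d)$ and hence $\cjs(d) < 5\lambda\ns(d)$; by the absolute-off hypothesis $\ns(d) \le \tfrac{6}{5}\ns(t_i)$, so $\cjs(d) < 6\lambda\ns(t_i)$. Since item $j$'s count can drop only via expiry, $|\eif(U_j)| \ge \cjs(t_i) - \cjs(d) \ge \lambda\ns(t_i)$. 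Summing gives $|F_i|-|F_{i+1}| \ge x_i\lambda\ns(t_i)$. For (c) I note that every element of $F_i$ has time-stamp in $[t_i+1-W, t-W] \subseteq [t_i+1-W, t_i]$, so $|F_i| \le \ns(t_i)$. Combining, $(1-x_i\lambda)|F_i| \ge |F_{i+1}|$, which (since $1/(1-y) > 1+y$ for $y>0$) gives part (i).

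For part (ii) I iterate (i) to get $|F_1| > \prod_{i=1}^{k-1}(1+x_i\lambda)\,|F_k|$. The set $F_1$ consists of items whose first expiry lies in $[t_o, t]$, so $|F_1| \le n$; and $|F_k| \ge 1$. Taking logs and using $\ln(1+y) \ge y/(1+y)$ together with $x_i\lambda \le 1/7$ (the same bound used for up events, via the footnote after Fact~\ref{lem:rj}) yields $\sum_{i=1}^{k-1} x_i = O(\tfrac{1}{\lambda}\log n)$; adding $x_k \le 1/(7\lambda)$ gives the claimed bound. The only delicate step is the same-item case of the disjointness claim — everything else is estimation — so I would present that observation first and highlight that it is exactly the ``first off time'' in the definition of $\eif(U_j)$ that makes the argument go through.
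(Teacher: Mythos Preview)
Your proof is correct and follows the same characteristic-set strategy as the paper: lower-bound each $|\eif(U_j)|$ by $\lambda\,\ns(t_i)$ via the up/off thresholds together with the absolute-off hypothesis, use $F_i \subseteq \sigma_{[t_i-W+1,\,t-W]}$ to get $|F_i| \le \ns(t_i)$, and then telescope exactly as in Lemma~\ref{lem:fi-step-up-s-simple}(ii). You are in fact more careful than the paper in spelling out why the $\eif$ sets of distinct poor-ups (including same-item ones at different times, via the first-off separation) are disjoint --- the paper simply asserts $|F_i| > |F_{i+1}| + x_i\lambda\,\ns(t_i)$ without justifying this step; the only quibble is that the bound $x_i \le 1/(7\lambda)$ comes from the up-event footnote in Section~\ref{sect:simpleAlg}, not the footnote to Fact~\ref{lem:rj}.
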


\begin{proof}
For (i),
consider an item $j$ and a poor-up $U_j$ with an absolute off that occurs at time $t_i$
and has its first off at time $d_i$.
The decrease in $\cjs$ 
must be due to expiry of item~$j$.
\begin{eqnarray*}
|\eif(U_j)|
& \ge & \cjs(t_i) - \cjs(d_i)
\ge \hcjs(t_i) - \hcjs(d_i) - \lambda \ns(t_i) - \lambda \ns(d_i) \\
& > & 9 \lambda \hns(t_i) - 3 \lambda \hns(d_i) - \lambda \ns(t_i) - \lambda \ns(d_i)
\hspace{.52in}\mbox{(definition of up and off)}\\
& \ge & \textstyle(9 \lambda (1 - \frac{\lambda}{6}) - \lambda)\ns(t_i)
- (3 \lambda(1 + \frac{\lambda}{6}) + \lambda) \ns(d_i)
\ge 7 \lambda \ns(t_i) - 5 \lambda \ns(d_i) \\
& \ge & \textstyle (7  - 5 (\frac{6}{5}) )\lambda \ns(t_i) = \lambda \ns(t_i)
\hspace{1.15in}\mbox{(definition of absolute off)}
\end{eqnarray*}
Thus,
$|F_{i}| > |F_{i+1}| + x_i \left(\lambda \ns(t_i)\right)$.
Since $F_i \subseteq \sigma_{[t_i - W + 1, t - W]}$,
$|F_i| \le \ns(t_i)$. Therefore,
$|F_i|
> |F_{i+1}| +  x_i \lambda |F_i|
> (1 + x_i \lambda) |F_{i+1}|$.
By (i), we can prove (ii) similarly to Lemma~\ref{lem:fi-step-up-s-simple} (ii).
\end{proof}

Analyzing poor-ups with a relative off is again based
on an isolating argument.
We divide $[t_o,t]$ into $O(\log n)$ intervals
according to how fast the total item count starting from $t_o$ grow;
specifically, we want two consecutive time boundaries $u_{i-1}$ and $u_i$
to satisfy $|\sigma_{[t_o, u_i]} | > \frac{6}{5} | \sigma_{[t_o, u_{i-1}]} |$.
Then we  show that for any poor-up  within $[u_{i-1},u_i - 1]$,
its relative off, if exists, occurs
at or after $u_i$.  Thus
there are at most $\frac{1}{\lambda}$ such poor-ups within each interval
and a total of $O(\frac{1}{\lambda}  \log n)$ within $[t_o,t]$.

\begin{lemma}\label{lem:fi-type2}
{\bf (i)} Consider a poor-up $U_j$ with a relative off.  Suppose it occurs at time $v$
in $[t_o,t]$, and its first off time is at $d$ in $[v+1,t]$.
Then $|\sigma_{[t_o, d]}| > \frac{6}{5}| \sigma_{[t_o, v]} |$.
{\bf (ii)} Within $[t_o,t]$, there are at most $O( \frac{1}{\lambda}
\log n)$ poor-ups each with a relative off.
\end{lemma}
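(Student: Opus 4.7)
The plan is to prove (i) by directly comparing the two local window counts $\ns(v)$ and $\ns(d)$ under the hypothesis of a relative off, and then to use (i) together with a geometric--interval argument in the spirit of Lemma~\ref{lem:fi-type2-up}, capped by Fact~\ref{lem:rj}, to derive (ii).

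For (i), I split each sliding--window count at the reference boundary $t_o$. Since $v \leq d \leq t = t_o + W - 1$, both $v-W+1 \leq t_o$ and $d-W+1 \leq t_o$, so $\ns(v) = |\sigma_{[v-W+1,\,t_o-1]}| + |\sigma_{[t_o,\,v]}|$ and $\ns(d) = |\sigma_{[d-W+1,\,t_o-1]}| + |\sigma_{[t_o,\,d]}|$. Because $v<d$, the ``pre-window'' interval $[d-W+1,\,t_o-1]$ is contained in $[v-W+1,\,t_o-1]$, so its count can only decrease when we pass from $v$ to $d$. The definition of relative off gives $\ns(d) > \tfrac{6}{5}\ns(v)$; substituting the two decompositions and rearranging yields $|\sigma_{[t_o,d]}| > \tfrac{6}{5}|\sigma_{[t_o,v]}| + \tfrac{1}{5}|\sigma_{[v-W+1,\,t_o-1]}| \geq \tfrac{6}{5}|\sigma_{[t_o,v]}|$, which is the claimed inequality.

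For (ii), set $u_0 = t_o$ and, for each $i \geq 1$, let $u_i$ be the smallest time in $(u_{i-1},\,t]$ with $|\sigma_{[t_o,\,u_i]}| > \tfrac{6}{5}|\sigma_{[t_o,\,u_{i-1}]}|$. Since the prefix count grows by a factor of at least $\tfrac{6}{5}$ across each boundary and is bounded by $n$, there are only $\ell = O(\log n)$ such intervals. Now fix an interval $[u_{i-1},\,u_i-1]$ and consider a poor-up $U_j$ with relative off that occurs at some $v$ in it with first off time $d$. Part~(i) gives $|\sigma_{[t_o,d]}| > \tfrac{6}{5}|\sigma_{[t_o,v]}| \geq \tfrac{6}{5}|\sigma_{[t_o,\,u_{i-1}]}|$, which by the minimality of $u_i$ forces $d \geq u_i$. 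Hence at time $u_i - 1$ every item responsible for such a poor-up in $[u_{i-1},\,u_i-1]$ still has $\off=\false$. A second such poor-up of the same item inside the interval would first require an off event of that item, contradicting $d \geq u_i$; so the poor-ups with relative off in the interval are in bijection with a subset of the items whose off flag is false at $u_i - 1$. Fact~\ref{lem:rj} bounds the latter by $\tfrac{1}{\lambda}$, and summing over the $O(\log n)$ intervals yields the $O(\tfrac{1}{\lambda}\log n)$ bound.

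The main obstacle is choosing the right quantity to track: neither $\ns$ itself nor the recent-items sets used for absolute follow-ups are monotone in the way needed here, because of the interference between the ``pre-$t_o$'' tails of the two sliding windows at $v$ and $d$. Switching to the prefix count $|\sigma_{[t_o,\,\cdot]}|$ eliminates that tail and gives strict monotonicity in the right endpoint, which is precisely what converts the relative-off condition on overlapping $W$-windows in (i) into the clean geometric blow-up that drives the interval decomposition in (ii).
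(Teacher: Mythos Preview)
Your proof is correct and follows essentially the same approach as the paper: part~(i) via the relative-off inequality $\ns(d) > \tfrac{6}{5}\ns(v)$ translated into prefix counts $|\sigma_{[t_o,\cdot]}|$, and part~(ii) via the same geometric partition of $[t_o,t]$ by the growth of $|\sigma_{[t_o,\cdot]}|$, arguing that each relative off is pushed past $u_i$ and then invoking Fact~\ref{lem:rj} at $u_i-1$. The only cosmetic difference is that the paper bounds $|\sigma_{[t_o,d]}|-|\sigma_{[t_o,v]}|=|\sigma_{[v+1,d]}|\ge \ns(d)-\ns(v)>\tfrac{1}{6}\ns(d)\ge\tfrac{1}{6}|\sigma_{[t_o,d]}|$ directly, whereas you split each $\ns$ at $t_o$; both arrive at the same inequality.
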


\begin{proof}
For (i), by the definition of a relative off,
$\ns(d) > \frac{6}{5}\ns(v)$.
Thus,
$|\sigma_{[t_o, d]} | - | \sigma_{[t_o, v]} |
= | \sigma_{[v+1, d]} |
\ge \ns(d) - \ns(v) > \frac{1}{6} \ns(d)
\ge \frac{1}{6}| \sigma_{[t_o, d]}|$.
This implies
$|\sigma_{[t_o, d]}| > \frac{6}{5}| \sigma_{[t_o, v]} |$.

For (ii), consider the times $t_o = u_0 < u_1 < u_2 < \dots< u_\ell \le t$
such that for $i \ge 1$,
$u_i$ is the first time such that
$|\sigma_{[t_o, u_i]} | > \frac{6}{5} | \sigma_{[t_o, u_{i-1}]} |$.
For convenience, let $u_{\ell + 1} = t + 1$.
Note that $| \sigma_{[t_o, t]} | \le n$ and $\ell = O(\log n)$.
Furthermore, for any time $v \in [u_{i-1}, u_i - 1]$,
$| \sigma_{[t_o, v]} | \le \frac{6}{5} |\sigma_{[t_o, u_{i-1}]} |$.
Therefore, by (i),
for any poor-up of an item $j$ within $[u_{i-1}, u_i - 1]$,
its relative off, if exists, occurs at or after $u_i$,
which implies
at time $u_i - 1$, $\cjs(u_i-1) \ge \lambda \ns (u_i - 1)$.
Then within each interval $[u_{i-1}, u_i - 1]$,
the number of such $j$ as well as the number of poor-ups with a relative off
are at most $\frac{1}{\lambda}$.
Within $[t_o, t]$, there are $\ell = O( \log n)$ intervals
and hence $O( \frac{1}{\lambda} \log n)$
poor-ups each with a relative off.
\end{proof}




\begin{theorem} \label{thm:fi-comm}
For approximate counting,
each individual stream can use the algorithm AC
with $\lambda = \varepsilon / 11$ and
it sends at most $O( \frac{1}{\varepsilon} \log n )$ words to the root
within a window.
\end{theorem}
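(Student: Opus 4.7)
The plan is to show that each of the three event categories---off, up, and down---contributes at most $O(\frac{1}{\lambda}\log n)$ messages within the reference window $[t_o,t]$, so that setting $\lambda=\varepsilon/11$ yields the claimed $O(\frac{1}{\varepsilon}\log n)$ bound.

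First I would dispose of off events: by Fact~\ref{lem:rj} at most $1/\lambda$ items have $\off=\false$ just before $t_o$, and within the window a fresh off can only be triggered after some up event has flipped the flag to \false. Hence the number of off events is at most $1/\lambda$ plus the number of up events, and the task reduces to bounding up events (and, by symmetry, down events).

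For up events I would partition by previous-event time and by whether the up immediately follows an off. Ups whose previous event lies before $t_o$ number at most $1/\lambda$, by the first observation after Fact~\ref{lem:rj} applied at $t_o$. Non-poor follow-ups with previous event inside $[t_o,t]$ I would bound by importing the characteristic-set proofs of Lemmas~\ref{lem:fi-step-up-s-simple} and~\ref{lem:fi-type2-up}, but with the $\Delta$ bound there replaced by $1/\lambda$; the replacement is legitimate because, by the observations after Fact~\ref{lem:rj}, only items currently carrying $\off=\false$ can trigger such an event. Finally, poor-ups split further: at most $1/\lambda$ keep $\off=\false$ through $t$ (again Fact~\ref{lem:rj}), and the remainder have a first off time $d\le t$ and are either absolute or relative, so Lemmas~\ref{lem:fi-poor-up-s} and~\ref{lem:fi-type2} already deliver $O(\frac{1}{\lambda}\log n)$.

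Down events mirror up events: absolute follow-downs are controlled by the characteristic set $T_i$ sketched just before this theorem, again with $1/\lambda$ replacing the trivial $\Delta$ used there, and relative follow-downs by an isolating-interval argument analogous to Lemma~\ref{lem:fi-type2-up}. Summing all contributions gives $O(\frac{1}{\lambda}\log n)=O(\frac{1}{\varepsilon}\log n)$ words. The step I expect to demand the most care is the $\Delta\to 1/\lambda$ substitution in the reused simple-algorithm arguments: I must verify that under algorithm AC, any item producing a non-poor up or a down whose previous event already lies in $[t_o,t]$ must currently carry $\off=\false$, so that the $1/\lambda$ cap from Fact~\ref{lem:rj} genuinely replaces $\Delta$ at every place the old proofs invoked it---in particular inside the bound on $x_i$ in Lemma~\ref{lem:fi-step-up-s-simple} and inside the per-interval count in Lemma~\ref{lem:fi-type2-up}.
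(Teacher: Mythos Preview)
Your proposal is correct and mirrors the paper's argument: off events bounded via ups, non-poor ups and downs handled by rerunning the simple-algorithm lemmas with $1/\lambda$ in place of $\Delta$ (justified by the two bulleted observations following Fact~\ref{lem:rj}), and poor-ups by Lemmas~\ref{lem:fi-poor-up-s} and~\ref{lem:fi-type2}. Two small tightenings: the $1/\lambda$ cap on ups with previous event before $t_o$ comes from the \emph{second} observation, not the first, and covers only \emph{non-poor} ups---poor-ups whose preceding off lies before $t_o$ are already absorbed into your poor-up category, since that analysis looks forward to the first off time rather than back to the previous event; and the $\Delta\to 1/\lambda$ substitution is not needed for the $x_i$ bound in Lemma~\ref{lem:fi-step-up-s-simple} (already $1/(7\lambda)$ independently of $\Delta$), but rather for the per-interval count in Lemma~\ref{lem:fi-type2-up} and the individual $y_i$ bound in the $T_i$ argument for absolute follow-downs.
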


{\bf Memory usage of each remote site.} 
Recall that we use two $\lambda$-approximate data structures
\cite{DatarGIM02,LeeT06}
for the total item count and individual item counts,
which respectively require
$O(\frac{1}{\lambda} \log^2(\lambda n))$ bits
and $O(\frac{1}{\lambda})$ words.
Note that $O(\frac{1}{\lambda} \log^2(\lambda n))$ bits
is equivalent to $O(\frac{1}{\lambda} \log (\lambda n))$ words.
Furthermore, at any time,
we only need to keep track of
the last estimate sent to the root of
all item $j$ with $\off = \false$,
which by Fact~\ref{lem:rj}, requires $O(\frac{1}{\lambda})$ words.
By setting $\lambda = \varepsilon / 11$ (see Theorem~\ref{thm:fi-comm}),
the total memory usage of a remote site
is $O(\frac{1}{\lambda} \log (\lambda n))
= O(\frac{1}{\varepsilon} \log (\varepsilon n))$ words.

\section{Extensions}
\label{sec:extension}
We extend the previous 
techniques to solve the problems of frequent items
and quantiles and  handle out-of-order streams.
Below BC refers to our algorithm for basic counting.

\vspace{.5ex}
{\bf Frequent items.}
Using the algorithms BC and AC, the root can
answer the $\varepsilon$-approximate frequent items as follows.
Each stream $\sigma$ communicates with the root using BC
with error parameter $\varepsilon/ 24$
and AC with error parameter $11 \varepsilon/ 24$.
At any time $t$,
let $\rs(t)$ and $\rjs(t)$ be the latest estimates of
the numbers of all items and item $j$, respectively,
received by the root from $\sigma$.  
To answer
a query of frequent items with threshold
$\phi \in (0, 1]$ at time $t$,
the root can return all items $j$ with $\sum_\sigma \rjs(t) \ge
(\phi - \frac{\varepsilon}{2}) \sum_\sigma \rs(t)$ as the set of frequent
items.

To see the correctness, let $\nsm(t)$ and $\cjsm(t)$
be the number of all items and item $j$ in $\sigma$ at time $t$, respectively.
Algorithm BC guarantees $| \rs(t) - \nsm(t) | \le \frac{\varepsilon}{24} \nsm(t)$,
and algorithm AC guarantees $| \rjs(t) - \cjsm(t) | \le \frac{11 \varepsilon}{24} \nsm(t)$.
Therefore, if an item $j$ is returned by the root, then
$ \sum_\sigma \cjsm(t)
\ge \textstyle \sum_\sigma \rjs(t) -
      \frac{11 \varepsilon}{24} \sum_\sigma \nsm(t)
\ge \textstyle (\phi - \frac{\varepsilon}{2}) \sum_\sigma \rs(t) -
      \frac{11 \varepsilon}{24}  \sum_\sigma \nsm(t)
\ge \textstyle (\phi - \frac{\varepsilon}{2}) (1 - \frac{\varepsilon}{24} ) \sum_\sigma \nsm(t)
      - \frac{11 \varepsilon}{24} \sum_\sigma \nsm(t)
\ge (\phi - \frac{\varepsilon}{2}  - \phi \frac{\varepsilon}{24}
- \frac{11 \varepsilon}{24}) \sum_\sigma \nsm(t)$
where 
the second inequality comes from the definition of the algorithm.
The last term above is at least $(\phi - \varepsilon) \sum_\sigma \nsm(t)$,
so $j$ is a frequent item.
If an item $j$ is not returned by the root, then $\sum_\sigma
\rjs(t) < (\phi - \frac{\varepsilon}{2}) \sum_\sigma \rs(t)$ and we can
show similarly that $\sum_\sigma \cjsm(t) < \phi \sum_\sigma \nsm(t)$.
%

\vspace{.5ex}
{\bf Quantiles.}
We give an algorithm for
$\varepsilon$-approximate quantiles queries.
Let $\lambda = \varepsilon /20$.
For each stream, we keep track of the $\lambda$-approximate $\phi$-quantiles
for $\phi = 5 \lambda, 10 \lambda, 15 \lambda, \dots, 1$.
We update the root for all these $\phi$-quantiles when one of the following two events occurs:
(i) for any $k$, the value of the $(5k\lambda)$-quantile is
larger than the value of the $(5(k+1) \lambda)$-quantile last reported to the root, or
(ii) for any $k$, the value of the $(5k\lambda)$-quantile is
smaller than the value of the $(5(k-1) \lambda)$-quantile last reported to the root.
The stream also communicates with the root using BC with
error parameter $\lambda$. In the root's perspective,
at any query time $t$, let $\phi \in (0,1]$ be the
query given and let $\rs(t)$ be the last estimate sent by
$\sigma$ for the number of all items.
The root sorts the quantiles last reported by all streams
and for each stream $\sigma$, gives a weight of $5 \lambda \rs(t)$ to each quantile of $\sigma$.
Then the root returns the smallest item $j$ in the sorted sequence such that
the sum of weights for all items no greater than $j$ is at least $\ceil{\phi \sum_\sigma \rs(t)}$.
Careful counting can show that $j$ is
an $\varepsilon$-approximate $\phi$-quantile.
  {To bound the communication cost, let $n$ be
the number of items of $\sigma$ arriving or expiring during the window
$[t-W+1,t]$.}
We observe that
when an event occurs, many items have either arrived or expired
after the previous event.
Using similar analysis as before,
we can show that within a window, there are at most $O(\frac{1}{\varepsilon} \log n)$ such events
and thus each stream sends $O(\frac{1}{\varepsilon^2} \log n)$
words.
  {By Jensen's inequality again, our algorithm's
 total communication cost per window is $O(\frac{k}{\varepsilon^2}\log
 \frac{N}{k})$
where $N$ is the number of items of the $k$ streams
that arrive or expire within the window.}
Note that the lower bound of $O(\frac{1}{\varepsilon} \log
(\varepsilon n))$ words
for approximate frequent items carries to approximate quantiles,
as we can answer approximate frequent items using approximate
quantiles as follows. The root poses $\varepsilon$-approximate
$\phi$-quantile queries
for $\phi = \varepsilon, 2\varepsilon, \dots, 1$.
Given the threshold $\phi'$ for frequent items,
the root returns all items that repeatedly occur as $\frac{\phi'}{\varepsilon} - 2$ (or more) consecutive quantiles,
and these items are $(4\varepsilon)$-approximate frequent items.

\vspace{.5ex}
{\bf Out-of-order streams.}
All our algorithms
can be extended to out-of-order stream with a communication cost increased
by a factor of $\frac{W}{W-\tau}$, as follows.
Each stream uses the data structures for out-of-order streams
(e.g., \cite{BuschT07,CormodeKT08}) to maintain the local estimates.
Then each stream uses our communication algorithms for in-order streams.
It is obvious the root
can answer the corresponding queries.
For the communication cost,
consider any time interval $P = [t - (W - \tau) + 1, t]$ of size $W-\tau$.
Items arriving in $P$ must have time-stamps in $[t-W+1, t]$.
Using the same arguments as before, we can show
the same communication cost of each algorithm,
but only for a window of size $W-\tau$ instead of $W$.
Equivalently, in any window of size $W$,
the communication cost is increased by a factor of $O(\frac{W}{W-\tau})$.


\begin{thebibliography}{10}
\small

\bibitem{Aggarwal06}
C.~Aggarwal.
\newblock {\em Data streams: models and algorithms}.
\newblock Springer, 2006.

\bibitem{AlonMS02}
N.~Alon, Y.~Matias, and M.~Szegedy.
\newblock The space complexity of approximating the frequency moments.
\newblock {\em Journal of Computer and System Sciences}, 58(1):137--147, 1999.

\bibitem{ArasuM04}
A.~Arasu and G.~Manku.
\newblock Approximate counts and quantiles over sliding windows.
\newblock In {\em Proc. PODS}, pages 286--296, 2004.

\bibitem{BabaiGKL04}
L.~Babai, A.~Gal, P.~Kimmel, and S.~Lokam.
\newblock Communication compleixty of simultaneous messages.
\newblock {\em SIAM Journal on Computing}, 33(1):137--166, 2004.

\bibitem{BabcockDM02}
B.~Babcock, M.~Datar, and R.~Motwani.
\newblock Sampling from a moving window over streaming data.
\newblock In {\em Proc. SODA}, pages 633--634, 2002.

\bibitem{BabcockO03}
B.~Babcock and C.~Olston.
\newblock Distributed top-$k$ monitoring.
\newblock In {\em Proc. SIGMOD}, pages 28--39, 2003.

\bibitem{BuschT07}
C.~Busch and S.~Tirthapua.
\newblock A deterministic algorithm for summarizing asynchronous streams over a
  sliding window.
\newblock In {\em STACS}, 2007.

\bibitem{CormodeG05}
G.~Cormode and M.~Garofalakis.
\newblock Sketching streams through the net: distributed approximate query
  tracking.
\newblock In {\em Proc. VLDB}, pages 13--24, 2005.

\bibitem{CormodeGMR05}
G.~Cormode, M.~Garofalakis, S.~Muthukrishnan, and R.~Rastogi.
\newblock Holistic aggregates in a networked world: distributed tracking of
  approximate quantiles.
\newblock In {\em Proc. SIGMOD}, 25--36, 2005.

\bibitem{CormodeKT08}
G.~Cormode, F.~Korn, and S.~Tirthapura.
\newblock Time-decaying aggregates in out-of-order streams.
\newblock In {\em Proc. PODS}, pages 89--98, 2008.

\bibitem{CormodeMY08}
G.~Cormode, S.~Muthukrishnan, and K.~Yi.
\newblock Algorithms for distributed functional monitoring.
\newblock In {\em Proc. SODA}, pages 1076--1085, 2008.

\bibitem{DasGGR04}
A.~Das, S.~Ganguly, M.~Garofalakis, and R.~Rastogi.
\newblock Distributed set-expression cardinality estimation.
\newblock In {\em Proc. VLDB}, pages 312--323, 2004.

\bibitem{DatarGIM02}
M.~Datar, A.~Gionis, P.~Indyk, and R.~Motwani.
\newblock Maintaining stream statistics over sliding windows.
\newblock {\em SIAM Journal on Computing}, 31(6):1794--1813, 2002.

\bibitem{DatarM02}
M.~Datar and S.~Muthukrishnan.
\newblock Estimating rarity and similarity over data stream windows.
\newblock In {\em Proc. ESA}, pages 323--334, 2002.

\bibitem{DemaineLM02}
E.~Demaine, A.~Lopez-Ortiz, and J.~Munro.
\newblock Frequency estimation of internet packet streams with limited space.
\newblock In {\em Proc. ESA}, pages 348--360, 2002.



\bibitem{GibbonsT02}
P.~Gibbons and S.~Tirthapura.
\newblock Distributed streams algorithms for sliding windows.
\newblock In {\em Proc. SPAA}, pages 63--72, 2002.

\bibitem{GreenwaldK04}
M.~Greenwald and S.~Khanna.
\newblock Power-conserving computation of order-statistics over sensor
  networks.
\newblock In {\em Proc. PODS}, pages 275--285, 2004.

\bibitem{GuhaKS01}
S.~Guha, N.~Koudas, and K.~Shim.
\newblock Data-streams and histograms.
\newblock In {\em Proc. STOC}, pages 471--475, 2001.


\bibitem{Indyk00}
P.~Indyk.
\newblock Stable distributions, pseudorandom generators, embeddings and data
  stream computation.
\newblock In {\em Proc. FOCS}, pages 148--155, 2000.

\bibitem{JainYDZ05}
N.~Jain, P.~Yalagandula, M.~Dahlin, and Y.~Zhang.
\newblock Insight: A distributed monitoring system for tracking continuous
  queries.
\newblock In {\em Proc. SOSP}, pages 1--7, 2005.

\bibitem{KeralapuraCR06}
R.~Keralapura, G.~Cormode, and J.~Ramamirtham.
\newblock Communication-efficient distributed monitoring of thresholded counts.
\newblock In {\em Proc. SIGMOD}, pages 289--300, 2006.

\bibitem{LeeT06b}
L.~K. Lee and H.~F. Ting.
\newblock Maintaining significant stream statistics over sliding windows.
\newblock In {\em Proc. SODA}, pages 724--732, 2006.

\bibitem{LeeT06}
L.~K. Lee and H.~F. Ting.
\newblock A simpler and more efficient deterministic scheme for finding
  frequent items over sliding windows.
\newblock In {\em Proc. PODS}, pages 290--297, 2006.

\bibitem{ManjhiSDO05}
A.~Manjhi, V.~Shkapenyuk, K.~Dhamdhere, and C.~Olston.
\newblock Finding (recently) frequent items in distributed data streams.
\newblock In {\em Proc. ICDE}, pages 767--778, 2005.

\bibitem{MouratidisBP06}
K.~Mouratidis, S.~Bakiras, and D.~Papadias.
\newblock Continuous monitoring of top-k queries over sliding windows.
\newblock In {\em Proc. SIGMOD}, pages 635--646, 2006.

\bibitem{Muthukrishnan05}
S.~Muthukrishnan.
\newblock {\em Data streams: algorithms and applications}.
\newblock Now Publisher Inc., 2005.

\bibitem{OlstonJW03}
C.~Olston, J.~Jiang, and J.~Widom.
\newblock Adaptive filters for continuous queries over distributed data
  streams.
\newblock In {\em Proc. SIGMOD}, pages 563--574, 2003.

\bibitem{SharfmanSK06}
I.~Sharfman, A.~Schuster, and D.~Keren.
\newblock A geometric approach to monitoring threshold functions over
  distributed data streams.
\newblock {\em ACM TODS}, 32(4), 2007.

\bibitem{YiZ08}
K.~Yi and Q.~Zhang.
\newblock Optimal tracking of distributed heavy hitters and quantiles.
\newblock In {\em Proc. PODS}, pages 167--174, 2009.

\bibitem{YiZ09}
K.~Yi and Q.~Zhang.
\newblock Private communication.
\end{thebibliography}

\bibliographystyle{plain}

\end{document}